\documentclass[letterpaper,12pt]{article}
\usepackage{amssymb, amsmath}
\usepackage{amsthm}
\usepackage{algorithm}
\usepackage{algorithmic}
\usepackage{enumerate}
\usepackage[numbers,square,comma,sort&compress]{natbib}
\usepackage[pdftex,colorlinks]{hyperref}
\usepackage{a4}

\newtheorem{theorem}{Theorem}

\newtheorem{lemma}[theorem]{Lemma}

\newcommand{\comment}[1]{}
 
\newcommand\bs[1]{\boldsymbol{#1}}


\begin{document}
\title{A deterministic sublinear-time
nonadaptive
algorithm for metric
$1$-median selection}

\author{
Ching-Lueh Chang \footnote{Department of Computer Science and
Engineering,
Yuan Ze University, Taoyuan, Taiwan. Email:
clchang@saturn.yzu.edu.tw}
\footnote{Innovation Center for Big Data and Digital Convergence,
Yuan Ze University, Taoyuan, Taiwan.}
\footnote{Supported in part by the Ministry of Science and Technology
of Taiwan under
grant
103-2221-E-155-026-MY2.}
}


\maketitle

\begin{abstract}
We
give
a deterministic $O(hn^{1+1/h})$-time $(2h)$-approximation
nonadaptive
algorithm for $1$-median selection in
$n$-point metric
spaces,
where
$h\in\mathbb{Z}^+\setminus\{1\}$ is arbitrary.
Our proof generalizes that of Chang~\cite{Cha13}.
\end{abstract}


\section{Introduction}

A metric space $(M,d)$ is a nonempty set $M$
endowed with a function $d\colon M\times M\to[\,0,\infty\,)$ such that
for all $x$, $y$, $z\in M$,
\begin{itemize}
\item $d(x,y)=0$ if and only if $x=y$,
\item $d(x,y)=d(y,x)$,
and
\item $d(x,y)+d(y,z)\ge d(x,z)$ (triangle inequality).
\end{itemize}
The
{\sc metric $1$-median}
problem asks for a point in
an $n$-point
metric space
$(M,d)$
with the minimum average distance to other points.
For $c\ge 1$,
a point
$\hat{p}\in M$
is said to be $c$-approximate for {\sc metric $1$-median} if
$$
\sum_{x\in M}\,d\left(\hat{p},x\right)
\le c\cdot
\min_{p\in M}\,
\sum_{x\in M}\,d\left(p,x\right).
$$
An algorithm for {\sc metric $1$-median} is nonadaptive
if the sequence of distances that it inspects
depends only on $M$ but not on $d$.
Because there are $n(n-1)/2$ nonzero distances,
``sublinear-time'' will mean ``$o(n^2)$-time.''

Indyk~\cite{Ind99, Ind00} shows that {\sc metric $1$-median} has a
Monte-Carlo $O(n/\epsilon^2)$-time $(1+\epsilon)$-approximation algorithm
for each $\epsilon>0$.
In $\mathbb{R}^D$,
where $D\ge1$,
{\sc metric $1$-median} has a
Monte-Carlo $O(2^{\text{poly}(1/\epsilon)}D)$-time
$(1+\epsilon)$-approximation algorithm for all $\epsilon>0$~\cite{KSS10}.
Many other algorithms are known for $k$-median
selection~\cite{GMMMO03, KSS10, ABS10}.
For example,
Guha et al.~\cite{GMMMO03}
give
a deterministic, $O(n^{1+\epsilon})$-time, $O(n^\epsilon)$-space,
$2^{O(1/\epsilon)}$-approximation
and one-pass
algorithm as well as a Monte-Carlo algorithm for $k$-median
selection in metric spaces, where $\epsilon>0$.


We show
that {\sc metric $1$-median} has a deterministic
$O(hn^{1+1/h})$-time $(2h)$-approximation nonadaptive algorithm for all
$h\in\mathbb{Z}^+\setminus\{1\}$, generalizing the following theorems:

\begin{theorem}[\cite{Cha13}]\label{mypreviousupperboundresult}
{\sc Metric $1$-median} has a deterministic $O(n^{1.5})$-time
$4$-approximation nonadaptive algorithm.
\end{theorem}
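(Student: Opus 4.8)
The plan is to exhibit a fixed query pattern of size $O(n^{1.5})$ together with a selection rule, and to prove the $4$-approximation guarantee purely through the triangle inequality. Throughout, identify $M$ with $\{1,\dots,n\}$ so that the query pattern may be described by indices alone (this is exactly what ``nonadaptive'' demands), write $\mathrm{cost}(p)=\sum_{x\in M}d(p,x)$, and let $p^\ast$ attain $\mathrm{OPT}=\min_p \mathrm{cost}(p)$.

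First I would isolate the elementary metric fact that drives everything. If a point $\hat p$ satisfies $d(\hat p,p^\ast)\le 3\,\mathrm{OPT}/n$, then by the triangle inequality $\mathrm{cost}(\hat p)\le \sum_{x}\bigl(d(\hat p,p^\ast)+d(p^\ast,x)\bigr)=n\cdot d(\hat p,p^\ast)+\mathrm{OPT}\le 4\,\mathrm{OPT}$. Moreover a Markov-type count shows that fewer than $n/3$ points $x$ satisfy $d(x,p^\ast)>3\,\mathrm{OPT}/n$, so at least $2n/3$ points are themselves $4$-approximate, and they all lie in a common ball of radius $3\,\mathrm{OPT}/n$ about $p^\ast$ (hence pairwise within $6\,\mathrm{OPT}/n$). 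Thus the entire task reduces to the following: output, deterministically and nonadaptively, any single point of this dense cluster.

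The query pattern I would use is a partition of $M$ into $\lceil\sqrt n\rceil$ blocks of size $\le\lceil\sqrt n\rceil$, querying all intra-block distances, so that the metric restricted to each block is fully known; this costs $O(n^{1.5})$ and is fixed in advance. Within each block $B_i$ I compute its own $1$-median $q_i$, producing $\sqrt n$ candidates. The structural claim to establish is that at least one $q_i$ is $4$-approximate. Intuitively, a block that captures many points of the dense cluster penalizes, \emph{within that block}, every point lying far from $p^\ast$, so its intra-block minimizer is pulled toward the cluster; combined with the guarantee that some block is cluster-heavy (a pigeonhole/averaging consequence of spreading at least $2n/3$ cluster points over $\sqrt n$ blocks), this is what I would convert into a proof, via the triangle-inequality bound of the second paragraph, that a $4$-approximate candidate exists among the $q_i$.

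The hard part is the selection step under the nonadaptive and deterministic constraints \emph{simultaneously}: the candidates $q_i$ depend on $d$, so one may not simply query each $q_i$ against all of $M$ afterwards (that would be adaptive), and a purely statistical estimate of $\mathrm{cost}(q_i)$ from a fixed sample is provably unreliable, since an adversary can isolate any single point together with its sample and make a globally bad point look cheap. I therefore expect the crux to be designing the fixed cross-block component of the query pattern (for instance, also recording, for every point, its distances to a bounded set of per-block anchors, still within the $O(n^{1.5})$ budget) so that the comparison among the $q_i$ can be carried out from already-queried distances while losing only a constant factor, rather than the $\Theta(\sqrt n)$ factor incurred by naive two-hop or single-reference estimators. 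Threading this needle---proving that the fixed pattern certifies a $4$-approximate candidate---is where the triangle inequality must be applied most carefully, and it is precisely the step whose generalization to finer, iterated partitions into blocks of size roughly $n^{1/h}$ would yield the $(2h)$-approximation in $O(h\,n^{1+1/h})$ time.
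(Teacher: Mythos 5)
Your proposal stops exactly where the real difficulty begins, and you say so yourself: the nonadaptive selection among the data-dependent candidates $q_1,\dots,q_{\lceil\sqrt n\rceil}$ is never constructed; you only state that you ``expect the crux to be designing the fixed cross-block component of the query pattern.'' Since you correctly rule out querying each $q_i$ against $M$ (adaptive) and fixed-sample cost estimation (foolable by an adversary), what remains is a plan with a hole at its self-identified center, not a proof. The structural claim feeding into it is also unproven: pigeonhole gives you cluster-heavy blocks, but the intra-block minimizer of such a block is only forced to lie within a radius of $p^\ast$ that involves the block's own outlier distances (which can be far larger than $3\,\mathrm{OPT}/n$), not within the $3\,\mathrm{OPT}/n$ ball that your paragraph-two reduction requires. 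Since the paper notes that the ratio $4$ cannot be improved to $4-\Omega(1)$ by any deterministic $o(n^2)$-query algorithm, there is no slack to absorb the constants such block-median arguments lose; at best you would get ``some $q_i$ is $O(1)$-approximate,'' which is not the theorem.

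The paper's own route (its general construction with $h=2$, which it states is equivalent to the cited proof when $n$ is a perfect square) avoids both problems by never forming data-dependent candidates. With $t=\lceil n^{1/2}\rceil$, it defines the pseudo-distance $\tilde d(i,\,i+j\bmod n)=d(i,\,i+s_1(j)\,t\bmod n)+d(i+s_1(j)\,t\bmod n,\,i+j\bmod n)$, routing from $i$ to $i+j$ through one intermediate point determined by the base-$t$ digits of the offset $j$. Three facts then finish: (i) $\tilde d\ge d$ by the triangle inequality (Lemma~\ref{pseudodistanceislarger}); (ii) for a uniformly random shift $\bs{u}$, every hop of every pseudo-path has uniformly distributed endpoints, so the expected surrogate cost $\mathrm{E}[\sum_j \tilde d(\bs{u},\bs{u}+j\bmod n)]$ is at most $4n\,\mathrm{E}[d(i',\bs{u})]=4\,\mathrm{OPT}$ (Lemma~\ref{approximationratiolemma}), hence the surrogate minimizer over \emph{all} $n$ points is $4$-approximate; and (iii) every point's surrogate cost is computable from the fixed query set $\{d(i,\,i+at\bmod n),\,d(i,\,i+b\bmod n): 0\le a,b<t\}$ of size $O(n^{1.5})$ via dynamic programming (Lemmas~\ref{theDPresultisthesumofpseudodistanceslemma}--\ref{recurrencelemma2}). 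In other words, the key idea you are missing is a surrogate objective that is defined for every point, dominates the true objective, is cheap on average by a random-shift argument, and is exactly evaluable from a fixed query set --- this simultaneously replaces your unproven structural claim and your missing selection rule.
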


\begin{theorem}[\cite{Wu14}]
For each
$h\in\mathbb{Z}^+\setminus\{1\}$,
{\sc metric $1$-median} has a deterministic $O(hn^{1+1/h})$-time
$(2h)$-approximation (adaptive) algorithm.\footnote{The time complexity of
$O(hn^{1+1/h})$ is originally presented as $O(n^{1+1/h})$ because $h$ is
independent of $n$.
We include the $O(h)$ factor, which is implicit in the original proof,
for ease of comparison.}
\end{theorem}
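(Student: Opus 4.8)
The plan is to reduce the theorem to a purely quantitative statement and then to build an explicit $h$-round adaptive algorithm matching the time budget. The quantitative statement is an averaging bound: writing $\mathrm{cost}(p)=\sum_{x\in M}d(p,x)$ and letting $p^\ast$ be an optimal median with $OPT=\mathrm{cost}(p^\ast)$, the triangle inequality gives
\[
\sum_{p\in M}\mathrm{cost}(p)=\sum_{p,x\in M}d(p,x)\le\sum_{p,x\in M}\bigl(d(p,p^\ast)+d(p^\ast,x)\bigr)=2n\cdot OPT,
\]
so the average value of $\mathrm{cost}$ is at most $2\,OPT$. By Markov's inequality at most $n/h$ points $p$ satisfy $\mathrm{cost}(p)>2h\cdot OPT$; hence at least $n(1-1/h)$ points are already $(2h)$-approximate. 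Thus the entire task is to locate, deterministically and within $O(hn^{1+1/h})$ time, a single point lying outside this thin ``bad'' set of size $\le n/h$.

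First I would set $t=\lceil n^{1/h}\rceil$ and maintain a shrinking candidate pool $C_0\supseteq C_1\supseteq\cdots\supseteq C_h$ with $|C_i|=\lceil n^{1-i/h}\rceil$, so that $C_0=M$ and $|C_h|=1$. In round $i$ I would choose (adaptively, using distances already inspected) a reference set $R_i\subseteq M$ of size $\Theta(n^{i/h})$, compute the surrogate score $g_i(c)=\sum_{x\in R_i}d(c,x)$ for every $c\in C_{i-1}$, and let $C_i$ consist of the $|C_i|$ candidates of smallest score. Each round inspects $|C_{i-1}|\cdot|R_i|=\Theta\bigl(n^{1-(i-1)/h}\cdot n^{i/h}\bigr)=\Theta(n^{1+1/h})$ distances, so the total number of inspected distances is $O(hn^{1+1/h})$; finally I would return the unique survivor (or evaluate $\mathrm{cost}$ exactly on the last, constant-sized pool). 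The adaptivity enters precisely because $R_i$ and the round-$i$ pruning are permitted to depend on the distance values read in rounds $1,\dots,i-1$.

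The approximation analysis I would carry out as an induction on the round index, maintaining the invariant that $C_i$ always retains some point whose true cost exceeds $OPT$ by at most an additively growing amount. The engine is again the averaging inequality, now applied inside the current pool, together with the triangle-inequality comparison between $\mathrm{cost}(c)$ and $(n/|R_i|)\,g_i(c)$ controlled through a fixed ``anchor'' point, so that ranking candidates by $g_i$ cannot evict every near-optimal point. Each round should cost only an additive $O(OPT)$, and the constants should be arranged so that the $h$ rounds telescope to exactly $2h\cdot OPT$. The case $h=2$ is designed to reproduce Theorem~\ref{mypreviousupperboundresult}, which I would use both as a sanity check and as the template for the single-round guarantee.

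The hard part will be the pruning soundness: because each $g_i$ estimates cost from only $n^{i/h}\ll n$ reference points, an adversarial metric could in principle make the scores rank a bad point above all good ones and thereby delete every $(2h)$-approximate survivor. Controlling this deterministically --- rather than by a random choice of $R_i$, which is unavailable here --- is the crux, and it is exactly the step that must generalize Chang's two-round argument. I expect to handle it by bounding the per-round discrepancy between $g_i$ and the true cost \emph{additively} (via the triangle inequality through the anchor and the within-pool averaging bound), which forces the accumulated error to be $2h\cdot OPT$ rather than the $2^{h}\cdot OPT$ that a naive multiplicative accounting would yield.
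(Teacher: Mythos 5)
First, a point of orientation: the paper never proves this statement---it is quoted from \cite{Wu14} purely for comparison---so your proposal can only be measured against the paper's proof of its own main theorem, which achieves the same time bound and ratio \emph{nonadaptively} by a completely different mechanism: the pseudo-distance $\tilde{d}$ of equation~(\ref{pseudodistance}) built from $t$-ary representations of indices, the upper-bound property of Lemma~\ref{pseudodistanceislarger}, the random-shift averaging bound of Lemma~\ref{approximationratiolemma}, and the dynamic program of Lemmas~\ref{recurrencelemma1}--\ref{recurrencelemma2}. Your iterative pool-shrinking scheme is structurally closer to Wu's adaptive algorithm than to anything in this paper, so the comparison that matters is whether your sketch actually constitutes a proof.

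It does not, and the gap is exactly where you say the ``crux'' is: pruning soundness is announced as a plan, never established. Everything you do verify---the averaging bound $\sum_{p}\mathrm{cost}(p)\le 2n\cdot\mathrm{OPT}$, the Markov count of at most $n/h$ bad points, the $\Theta(n^{1+1/h})$ per-round query arithmetic---is correct but easy, and none of it constrains which points survive your pruning; in particular the Markov observation gives a deterministic algorithm no way to \emph{recognize} a good point, so it cannot ``reduce the theorem'' to a search problem as claimed. To make the invariant work you must (a) give an explicit rule for choosing the reference sets $R_i$ and the anchor, and (b) prove that ranking by $g_i(c)=\sum_{x\in R_i}d(c,x)$ never evicts all near-optimal candidates with only an additive $O(\mathrm{OPT})$ loss per round. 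For an unspecified $R_i$ this is simply false: if $R_i$ happens to lie in a tight cluster far from the optimal median, every near-optimal point receives a large surrogate score and can be evicted, after which the surviving pool is only good relative to that cluster and the error compounds multiplicatively---precisely the $2^{h}$-type blowup you promise to avoid but never rule out. The known resolutions are concrete constructions: Wu chains local medians of blocks through the triangle inequality, paying a factor of two per level, while this paper makes the surrogate an upper bound on the true distance (Lemma~\ref{pseudodistanceislarger}) whose average over a uniformly random shift is at most $2h\cdot\mathrm{OPT}$ (Lemma~\ref{approximationratiolemma}). Your sketch contains neither ingredient, so as written it establishes the time budget but not the approximation ratio, i.e., not the theorem.
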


When $n$ is a perfect square and $h=2$, our proof is equivalent to that of
Theorem~\ref{mypreviousupperboundresult}~\cite{Cha13}.
Chang~\cite{Cha14arXiv} shows that {\sc metric $1$-median} has no
deterministic $o(n^2)$-query $(4-\Omega(1))$-approximation algorithms
(where
an algorithm's
query complexity
is the number of distances that it inspects).
So the approximation ratio of $4$ in Theorem~\ref{mypreviousupperboundresult}
cannot be
improved
to
a
smaller
constant.

\section{Our algorithm}

Let $(\{0,1,\ldots,n-1\},d)$ be a metric space,
$h\in\mathbb{Z}^+\setminus\{1\}$ and $t=\lceil n^{1/h}\rceil$.
For all $j\in \{0,1,\ldots,n-1\}$,
denote
the
(unique)
$t$-ary representation of $j$
by
\begin{eqnarray}
\left(s_{h-1}(j),s_{h-2}(j),\ldots,s_0(j)\right)
\in\left\{0,1,\ldots,t-1\right\}^h,
\nonumber
\end{eqnarray}
i.e.,
\begin{eqnarray}
\sum_{\ell=0}^{h-1}\,s_\ell(j)\cdot t^\ell=j.
\label{multiaryrepresentation}
\end{eqnarray}

For all $i$, $j\in\{0,1,\ldots,n-1\}$,
\begin{eqnarray}
\tilde{d}\left(i,i+j\bmod{n}\right)
\stackrel{\text{def.}}{=}
\sum_{k=0}^{h-1}\, d\left(i+\sum_{\ell=h-k}^{h-1}\, s_\ell(j)\cdot
t^\ell\bmod{n},
i+\sum_{\ell=h-1-k}^{h-1}\, s_\ell(j)\cdot t^\ell\bmod{n}\right).
\label{pseudodistance}
\end{eqnarray}
By convention,
empty sums vanish, e.g.,
$\sum_{\ell=h}^{h-1}\,s_\ell(j)\cdot t^\ell=0$.

\begin{lemma}\label{pseudodistanceislarger}
For all $i$, $j\in\{0,1,\ldots,n-1\}$,
$$
d\left(i,i+j\bmod{n}\right)
\le \tilde{d}\left(i,i+j\bmod{n}\right).
$$
\end{lemma}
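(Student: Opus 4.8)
The plan is to recognize the right-hand side $\tilde{d}(i,i+j\bmod n)$ as the total length of a path of $h$ hops from $i$ to $i+j\bmod n$, and then to collapse that path to a single edge via the triangle inequality. The quantity in (\ref{pseudodistance}) is a sum over $k\in\{0,1,\ldots,h-1\}$, and the idea is that consecutive summands share an endpoint, so the underlying sequence of points telescopes.

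First I would introduce, for $k\in\{0,1,\ldots,h\}$, the partial sums
$$
p_k\stackrel{\text{def.}}{=} i+\sum_{\ell=h-k}^{h-1}\,s_\ell(j)\cdot t^\ell\bmod n,
$$
so that $p_0=i$ (the exponent range is empty, hence the sum vanishes) and, by (\ref{multiaryrepresentation}), $p_h=i+\sum_{\ell=0}^{h-1}s_\ell(j)\cdot t^\ell\bmod n=i+j\bmod n$. The key observation, which I would verify directly from the definitions, is that in the $k$-th summand of (\ref{pseudodistance}) the first point equals $p_k$, while the second point $i+\sum_{\ell=h-1-k}^{h-1}s_\ell(j)\cdot t^\ell\bmod n$ coincides with $p_{k+1}$. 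Hence each summand is exactly $d(p_k,p_{k+1})$, and (\ref{pseudodistance}) becomes
$$
\tilde{d}\left(i,i+j\bmod n\right)=\sum_{k=0}^{h-1}\,d\left(p_k,p_{k+1}\right).
$$

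With the telescoping established, the conclusion follows from the triangle inequality applied along the chain $p_0,p_1,\ldots,p_h$: a straightforward induction on $h$ (equivalently, the generalized triangle inequality) gives $d(p_0,p_h)\le\sum_{k=0}^{h-1}d(p_k,p_{k+1})$. Since $p_0=i$ and $p_h=i+j\bmod n$, the left-hand side is precisely $d(i,i+j\bmod n)$, which yields the claimed bound.

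The only delicate point, and the one I would check most carefully, is the index bookkeeping for the two sums defining consecutive endpoints: one must confirm that replacing $k$ by $k+1$ in the first argument's exponent range $\sum_{\ell=h-k}^{h-1}$ reproduces exactly the second argument's range $\sum_{\ell=h-1-k}^{h-1}$, and that the boundary cases $k=0$ and $k=h-1$ recover $i$ and $i+j\bmod n$ through the empty-sum convention and (\ref{multiaryrepresentation}). This is immediate once both sums are written with a common lower limit, but it is where an off-by-one slip would silently break the telescoping; everything else is routine.
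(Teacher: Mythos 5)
Your proposal is correct and is essentially the paper's own argument: both identify the sum defining $\tilde{d}$ as the length of a telescoping chain of $h$ hops from $i$ to $i+j\bmod{n}$ (using the empty-sum convention and equation~(\ref{multiaryrepresentation}) for the endpoints) and then collapse it with the generalized triangle inequality. You simply spell out the index bookkeeping that the paper leaves implicit.
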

\begin{proof}
By equation~(\ref{pseudodistance}) and the triangle inequality for $d$,
$$
\tilde{d}\left(i,i+j\bmod{n}\right)
\ge
d\left(i,i+\sum_{\ell=0}^{h-1}\, s_\ell(j)\cdot
t^\ell\bmod{n}\right).
$$
This and
equation~(\ref{multiaryrepresentation})
complete the proof.
\end{proof}

\begin{lemma}
\label{approximationratiolemma}
For all
$\alpha\in\{0,1,\ldots,n-1\}$
with
\begin{eqnarray}
\sum_{j=0}^{n-1}\,\tilde{d}\left(\alpha,\alpha+j\bmod{n}\right)
=
\min_{i=0}^{n-1}\,
\sum_{j=0}^{n-1}\,\tilde{d}\left(i,i+j\bmod{n}\right),
\label{theoptimalpointwithrespecttopseudodistance}
\end{eqnarray}
we have
\begin{eqnarray}
\sum_{j=0}^{n-1}\, d\left(\alpha,j\right)
\le 2h
\cdot
\min_{i=0}^{n-1}\,
\sum_{j=0}^{n-1}\, d\left(i,j\right).
\label{approximationinequality}
\end{eqnarray}
\end{lemma}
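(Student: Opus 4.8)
The plan is to bound the \emph{average} of the pseudodistance objective over all starting points rather than reasoning about the minimizer $\alpha$ directly. Write $\mathrm{OPT}=\min_{i=0}^{n-1}\sum_{j=0}^{n-1}d(i,j)$ and fix a point $\beta$ attaining it. Since $\alpha$ satisfies~(\ref{theoptimalpointwithrespecttopseudodistance}), its pseudodistance objective is at most the average, so
\begin{eqnarray}
\sum_{j=0}^{n-1}\tilde d(\alpha,\alpha+j\bmod n)
\le\frac1n\sum_{i=0}^{n-1}\sum_{j=0}^{n-1}\tilde d(i,i+j\bmod n).
\nonumber
\end{eqnarray}
Combining this with Lemma~\ref{pseudodistanceislarger} and the reindexing $\sum_{j}d(\alpha,j)=\sum_{j}d(\alpha,\alpha+j\bmod n)$ (valid because $j\mapsto\alpha+j\bmod n$ is a bijection of $\{0,\ldots,n-1\}$), the lemma reduces to proving the single inequality $\sum_i\sum_j\tilde d(i,i+j\bmod n)\le 2hn\cdot\mathrm{OPT}$.

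Next I would expand the pseudodistance using its telescoping form. Writing $\sigma_k(j)=\sum_{\ell=h-k}^{h-1}s_\ell(j)\,t^\ell$, equation~(\ref{pseudodistance}) reads
\begin{eqnarray}
\tilde d(i,i+j\bmod n)=\sum_{k=0}^{h-1}d\!\left(i+\sigma_k(j)\bmod n,\;i+\sigma_{k+1}(j)\bmod n\right),
\nonumber
\end{eqnarray}
where consecutive offsets differ by the single term $\sigma_{k+1}(j)-\sigma_k(j)=s_{h-1-k}(j)\,t^{h-1-k}$. The crucial step is to interchange the sums and, for each fixed level $k$ and each fixed $j$, sum over $i$ first: as $i$ runs over $\{0,\ldots,n-1\}$ the map $i\mapsto i+\sigma_k(j)\bmod n$ is a bijection, so
\begin{eqnarray}
\sum_{i=0}^{n-1}d\!\left(i+\sigma_k(j)\bmod n,\;i+\sigma_{k+1}(j)\bmod n\right)
=\sum_{a=0}^{n-1}d\!\left(a,\;a+w_{k}(j)\bmod n\right),
\nonumber
\end{eqnarray}
with shift $w_k(j)=s_{h-1-k}(j)\,t^{h-1-k}$. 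Thus summing over $i$ collapses every level into a pure cyclic-shift sum, independent of the actual digit multiplicities.

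Finally I would bound each shift sum by $2\,\mathrm{OPT}$: for any shift $w$, the triangle inequality through $\beta$ gives $d(a,a+w\bmod n)\le d(a,\beta)+d(\beta,a+w\bmod n)$, and summing over $a$ yields $\sum_{a=0}^{n-1}d(a,a+w\bmod n)\le 2\sum_{x}d(\beta,x)=2\,\mathrm{OPT}$, since both $a$ and $a+w\bmod n$ range bijectively over all points. Hence, after summing over $j$, each of the $h$ levels contributes at most $\sum_{j=0}^{n-1}2\,\mathrm{OPT}=2n\,\mathrm{OPT}$, giving $\sum_i\sum_j\tilde d(i,i+j\bmod n)\le 2hn\,\mathrm{OPT}$, as required. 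I expect the main obstacle to be conceptual rather than computational: one must resist trying to control $\sum_j\tilde d(\alpha,\alpha+j\bmod n)$ for the specific minimizer $\alpha$ (where the single-digit shifts do not obviously telescope to anything cheap, and where the nonuniform digit counts and the mod-$n$ wraparound create genuine boundary effects), and instead pass to the average over all starting points, which is precisely what turns the inner sum over $i$ into a full cyclic-shift sum and thereby sidesteps all of the counting and boundary issues.
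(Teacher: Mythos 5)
Your proposal is correct and is essentially the paper's own argument: the paper also bounds the minimizer's objective by the average over all starting points (phrased as an expectation over a uniformly random $\bs{u}$), uses the shift-invariance of $i\mapsto i+c\bmod n$ to reduce each level of the telescoping sum to a quantity independent of the digits, and applies the triangle inequality through the true optimum to get the factor $2h$. The only difference is presentational (explicit bijections and cyclic-shift sums versus expectations), so there is nothing substantive to compare.
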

\begin{proof}
Let $\bs{u}$
be
a
uniformly random
element
of $\{0,1,\ldots,n-1\}$ and
\begin{eqnarray}
i'=\mathop{\mathrm{argmin}}_{i=0}^{n-1}\, \sum_{j=0}^{n-1}\,
d\left(i,j\right),
\label{startofequations}
\end{eqnarray}
breaking ties arbitrarily.
It is easy to see that
\begin{eqnarray}
\sum_{j=0}^{n-1}\, d\left(\alpha,j\right)
=\sum_{j=0}^{n-1}\, d\left(\alpha,\alpha+j\bmod{n}\right).
\nonumber
\end{eqnarray}
Furthermore,
\begin{eqnarray}
\sum_{j=0}^{n-1}\, d\left(\alpha,\alpha+j\bmod{n}\right)
\nonumber
&\stackrel{\text{Lemma~\ref{pseudodistanceislarger}}}{\le}&
\sum_{j=0}^{n-1}\,
\tilde{d}\left(\alpha,\alpha+j\bmod{n}\right)
\nonumber\\
&\stackrel{\text{equation~(\ref{theoptimalpointwithrespecttopseudodistance})}}{\le}&
\mathop{\mathrm E}\left[\,
\sum_{j=0}^{n-1}\,
\tilde{d}\left(\bs{u},\bs{u}+j\bmod{n}\right)
\,\right].
\nonumber
\end{eqnarray}
By equation~(\ref{pseudodistance}),
\begin{eqnarray}
&&
\mathop{\mathrm E}\left[\,
\sum_{j=0}^{n-1}\,
\tilde{d}\left(\bs{u},\bs{u}+j\bmod{n}
\right)\,\right]
\nonumber\\
&=&
\mathop{\mathrm E}\left[\,
\sum_{j=0}^{n-1}\,
\sum_{k=0}^{h-1}\, d\left(\bs{u}+\sum_{\ell=h-k}^{h-1}\, s_\ell(j)
\cdot t^\ell\bmod{n},
\bs{u}+\sum_{\ell=h-1-k}^{h-1}\, s_\ell(j)\cdot t^\ell\bmod{n}\right)
\,\right].
\nonumber
\end{eqnarray}

Finally,
{\small 
\begin{eqnarray}
&&\mathop{\mathrm E}\left[\,
\sum_{j=0}^{n-1}\,
\sum_{k=0}^{h-1}\, d\left(\bs{u}+\sum_{\ell=h-k}^{h-1}\, s_\ell(j)
\cdot t^\ell\bmod{n},
\bs{u}+\sum_{\ell=h-1-k}^{h-1}\, s_\ell(j)\cdot t^\ell\bmod{n}\right)
\,\right]
\nonumber\\
&\le&
\mathop{\mathrm E}\left[\,
\sum_{j=0}^{n-1}\,
\sum_{k=0}^{h-1}\, d\left(i',
\bs{u}+\sum_{\ell=h-k}^{h-1}\, s_\ell(j)
\cdot t^\ell\bmod{n}\right)
+d\left(i',
\bs{u}+\sum_{\ell=h-1-k}^{h-1}\, s_\ell(j)\cdot t^\ell\bmod{n}\right)
\,\right]
\nonumber\\
&=&
\sum_{j=0}^{n-1}\,
\sum_{k=0}^{h-1}\,
\mathop{\mathrm E}\left[\,
d\left(i',
\bs{u}+\sum_{\ell=h-k}^{h-1}\, s_\ell(j)
\cdot t^\ell\bmod{n}\right)
\,\right]
+\mathop{\mathrm E}\left[\,d\left(i',
\bs{u}+\sum_{\ell=h-1-k}^{h-1}\, s_\ell(j)\cdot t^\ell\bmod{n}\right)
\,\right]
\nonumber\\
&=&
\sum_{j=0}^{n-1}\,
\sum_{k=0}^{h-1}\,
\left(
\mathop{\mathrm E}\left[\,
d\left(i',
\bs{u}\right)
\,\right]
+\mathop{\mathrm E}\left[\,d\left(i',
\bs{u}\right)
\,\right]
\right)\nonumber\\
&=& 2nh \cdot \mathop{\mathrm E}\left[\,d\left(i',
\bs{u}\right)\,\right],
\label{endofequations}
\end{eqnarray}
}
where the
inequality follows from the triangle inequality for $d$, and the
second-to-last equality is true because
$\bs{u}+\sum_{\ell=h-k}^{h-1}\, s_\ell(j)\cdot t^\ell\bmod{n}$
distributes uniformly at random over $\{0,1,\ldots,n-1\}$
for {\em any} $j\in \{0,1,\ldots,n-1\}$ and $k\in\{0,1,\ldots,h\}$.
Inequalities~(\ref{startofequations})--(\ref{endofequations})
imply
inequality~(\ref{approximationinequality}).
\end{proof}

For a predicate $P$,
let $\chi[\,P\,]=1$ if $P$ is true and $\chi[\,P\,]=0$ otherwise.
Define
$(s'_{h-1},s'_{h-2},\ldots,s'_0)\in\{0,1,\ldots,t-1\}^h$
to be
the $t$-ary representation of $n-1$.
So $\sum_{r=0}^{h-1}\,s'_r\cdot t^r=n-1$.
For $i\in\{0,1,\ldots,n-1\}$ and $m\in\{0,1,\ldots,h-1\}$,
{\small 
\begin{eqnarray}
f\left(i,m\right)
&\stackrel{\text{def.}}{=}&
\sum_{s_m,s_{m-1},\ldots,s_0=0}^{t-1}\,
\chi\left[\sum_{r=0}^m\, s_r\cdot t^r\le \sum_{r=0}^m\, s'_r\cdot t^r
\right]\nonumber\\
&\cdot&\sum_{k=0}^m\,
d\left(i+\sum_{\ell=m+1-k}^m\, s_\ell \cdot t^\ell \bmod{n},
i+\sum_{\ell=m-k}^m\, s_\ell \cdot t^\ell \bmod{n}
\right),
\label{subsumlessthanorequalto}\\
g\left(i,m\right)
&\stackrel{\text{def.}}{=}&
\sum_{s_m,s_{m-1},\ldots,s_0=0}^{t-1}\,
\sum_{k=0}^m\,
d\left(i+\sum_{\ell=m+1-k}^m\, s_\ell \cdot t^\ell \bmod{n},
i+\sum_{\ell=m-k}^m\, s_\ell \cdot t^\ell \bmod{n}
\right).
\,\,\,\,\,\,\,\,\,\,\label{subsumlessthan}
\end{eqnarray}
}
Clearly,
\begin{eqnarray}
f\left(i,0\right)
&=&
\sum_{s_0=0}^{s'_0}\,
d\left(i,i+s_0\bmod{n}\right),
\label{thebasecaseofthefirstfunction}\\
g\left(i,0\right)
&=&
\sum_{s_0=0}^{t-1}\,
d\left(i,i+s_0\bmod{n}\right).
\label{thebasecaseofthesecondfunction}
\end{eqnarray}

\begin{lemma}\label{theDPresultisthesumofpseudodistanceslemma}
For all $i\in\{0,1,\ldots,n-1\}$,
\begin{eqnarray}
f\left(i,h-1\right)
=\sum_{j=0}^{n-1}\,
\tilde{d}\left(i,i+j\bmod{n}
\right).
\nonumber
\end{eqnarray}
\end{lemma}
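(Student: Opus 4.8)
The plan is to prove the identity by directly unfolding Definition~(\ref{subsumlessthanorequalto}) at $m=h-1$ and recognizing the resulting expression as the sum of pseudodistances; no induction is needed. The one substantive point is a change of summation variable: the sum $\sum_{s_{h-1},\ldots,s_0=0}^{t-1}$, restricted by the indicator $\chi[\,\cdot\,]$, must be rewritten as a sum over $j\in\{0,1,\ldots,n-1\}$.

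First I would set $m=h-1$ in~(\ref{subsumlessthanorequalto}). Since $(s'_{h-1},\ldots,s'_0)$ is the $t$-ary representation of $n-1$, the indicator becomes $\chi[\sum_{r=0}^{h-1}s_r t^r\le n-1]$. The key observation is that the map $(s_{h-1},\ldots,s_0)\mapsto\sum_{r=0}^{h-1}s_r t^r$ is a bijection from $\{0,1,\ldots,t-1\}^h$ onto $\{0,1,\ldots,t^h-1\}$ by uniqueness of the $t$-ary representation~(\ref{multiaryrepresentation}). Because $t=\lceil n^{1/h}\rceil$ gives $t^h\ge n$, the range $\{0,1,\ldots,n-1\}$ is contained in this image, so the indicator selects precisely those tuples whose value $j:=\sum_{r=0}^{h-1}s_r t^r$ lies in $\{0,1,\ldots,n-1\}$, and for each such $j$ the selected tuple is exactly $(s_{h-1}(j),\ldots,s_0(j))$. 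Hence the outer sum over tuples collapses to a single sum over $j\in\{0,1,\ldots,n-1\}$.

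It then remains to identify the inner $\sum_{k=0}^{h-1}$ with $\tilde{d}(i,i+j\bmod n)$. Substituting $m=h-1$ turns the index ranges $\sum_{\ell=m+1-k}^m$ and $\sum_{\ell=m-k}^m$ into $\sum_{\ell=h-k}^{h-1}$ and $\sum_{\ell=h-1-k}^{h-1}$, so the inner sum becomes term-by-term identical to the right-hand side of~(\ref{pseudodistance}) with $s_\ell=s_\ell(j)$. Combining the two steps yields $f(i,h-1)=\sum_{j=0}^{n-1}\tilde{d}(i,i+j\bmod n)$.

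I expect the main (and only) obstacle to be the bookkeeping of the reindexing in the second paragraph: one must verify that the indicator, together with $t^h\ge n$, captures each $j\in\{0,1,\ldots,n-1\}$ exactly once and with the correct digits, neither dropping values $j<n$ nor admitting spurious values $j\ge n$. Once this bijection is made precise, the remainder is pure substitution, matching the summation ranges in~(\ref{subsumlessthanorequalto}) against those in~(\ref{pseudodistance}).
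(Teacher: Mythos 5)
Your proposal is correct and follows essentially the same route as the paper: the paper also unfolds~(\ref{subsumlessthanorequalto}) at $m=h-1$, uses $\sum_{r=0}^{h-1}s'_r t^r=n-1$ to rewrite the indicator, and invokes the existence and uniqueness of $t$-ary representations of the elements of $\{0,1,\ldots,n-1\}$ to convert the tuple sum into a sum over $j$, matching the result against~(\ref{pseudodistance}). No further comment is needed.
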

\begin{proof}
As $\sum_{r=0}^{h-1}\,s'_r\cdot t^r= n-1$,
\begin{eqnarray}
f\left(i,h-1\right)
&=&
\sum_{s_{h-1},s_{h-2},\ldots,s_0=0}^{t-1}\,
\chi\left[\,\sum_{r=0}^{h-1}\,s_r\cdot t^r\le n-1\,\right]
\nonumber\\
&\cdot&
\sum_{k=0}^{h-1}\,
d\left(
i+\sum_{\ell=h-k}^{h-1}\,s_\ell\cdot t^\ell \bmod{n},
i+\sum_{\ell=h-1-k}^{h-1}\,s_\ell\cdot t^\ell \bmod{n}
\right).
\,\,\,\,\,\label{thethingwewantintheformofwhatwecancompute}
\end{eqnarray}
By the existence and uniqueness of a $t$-ary representation
of each
$j\in\{0,1,\ldots,n-1\}$,
\begin{eqnarray}
&&\sum_{j=0}^{n-1}\,
\sum_{k=0}^{h-1}\,
d\left(
i+\sum_{\ell=h-k}^{h-1}\,s_\ell(j)\cdot t^\ell \bmod{n},
i+\sum_{\ell=h-1-k}^{h-1}\,s_\ell(j)\cdot t^\ell \bmod{n}
\right)
\nonumber\\
&=&
\sum_{s_{h-1},s_{h-2},\ldots,s_0=0}^{t-1}\,
\chi\left[\,\sum_{r=0}^{h-1}\,s_r\cdot t^r\le n-1\,\right]
\nonumber\\
&\cdot&
\sum_{k=0}^{h-1}\,
d\left(
i+\sum_{\ell=h-k}^{h-1}\,s_\ell\cdot t^\ell \bmod{n},
i+\sum_{\ell=h-1-k}^{h-1}\,s_\ell\cdot t^\ell \bmod{n}
\right).
\label{justanequation}
\end{eqnarray}
%
Equations~(\ref{pseudodistance})~and~(\ref{thethingwewantintheformofwhatwecancompute})--(\ref{justanequation})
complete the proof.
\end{proof}


\begin{lemma}
\label{recurrencelemma1}
For all $i\in\{0,1,\ldots,n-1\}$ and $m\in\{1,2,\ldots,h-1\}$,
\begin{eqnarray}
g\left(i,m
\right)
&=&
t^m\sum_{s_m=0}^{t-1}\,
d\left(i,i+s_m\cdot t^m\bmod{n}\right)\nonumber\\
&+&\sum_{s_m=0}^{t-1}\,
g\left(i+s_m\cdot t^m\bmod{n},m-1
\right).
\nonumber
\end{eqnarray}
\end{lemma}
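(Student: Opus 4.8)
The plan is to start directly from the definition of $g(i,m)$ in equation~(\ref{subsumlessthan}) and to split the inner sum over $k$ according to whether $k=0$ or $k\ge 1$. For a fixed tuple $(s_m,s_{m-1},\ldots,s_0)$, the $k=0$ summand is $d\left(i,i+s_m t^m\bmod{n}\right)$, since the lower limit $m+1-k=m+1$ makes the first inner sum empty while the second reduces to $s_m t^m$. This term depends only on $s_m$, so when I carry out the summation over $s_{m-1},\ldots,s_0$ it is multiplied by the number $t^m$ of such tuples, producing the first term $t^m\sum_{s_m=0}^{t-1} d\left(i,i+s_m t^m\bmod{n}\right)$ of the claimed recurrence.

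For the remaining summands, indexed by $k\in\{1,\ldots,m\}$, I would substitute $k=k'+1$ so that $k'$ ranges over $\{0,\ldots,m-1\}$. The crucial observation is that the limits shift as $\sum_{\ell=m+1-k}^{m}=\sum_{\ell=m-k'}^{m}$ and $\sum_{\ell=m-k}^{m}=\sum_{\ell=m-1-k'}^{m}$, and that in each case the top index $\ell=m$ contributes the fixed term $s_m t^m$. Writing $i'=i+s_m t^m\bmod{n}$ and reabsorbing this term into both arguments, each summand becomes $d\bigl(i'+\sum_{\ell=m-k'}^{m-1} s_\ell t^\ell\bmod{n},\, i'+\sum_{\ell=m-1-k'}^{m-1} s_\ell t^\ell\bmod{n}\bigr)$, which is precisely a summand in the definition~(\ref{subsumlessthan}) of $g(i',m-1)$.

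Consequently, for each fixed $s_m$, summing these $k\ge 1$ terms over $s_{m-1},\ldots,s_0$ and over $k'\in\{0,\ldots,m-1\}$ reconstitutes exactly $g\left(i+s_m t^m\bmod{n},\, m-1\right)$; summing over $s_m$ then yields the second term $\sum_{s_m=0}^{t-1} g\left(i+s_m t^m\bmod{n},\, m-1\right)$, and adding the two contributions gives the recurrence. The one point demanding care is the index bookkeeping of the second paragraph: I must verify that the reduction of the upper limit from $m$ to $m-1$ in the sums defining $g(i',m-1)$ is exactly compensated by the reabsorption of $s_m t^m$ into $i'$, that is, that $i+\sum_{\ell=m+1-k}^{m} s_\ell t^\ell\bmod{n}=i'+\sum_{\ell=m-k'}^{m-1} s_\ell t^\ell\bmod{n}$ and likewise for the second argument. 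Everything else is a routine rearrangement of finite sums.
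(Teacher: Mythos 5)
Your proposal is correct and follows essentially the same route as the paper's proof: both isolate the $k=0$ summand (which equals $d(i,i+s_m t^m\bmod n)$ and acquires multiplicity $t^m$ from the sum over $s_{m-1},\ldots,s_0$) and identify the reindexed $k\ge 1$ summands, with $s_m t^m$ absorbed into the base point, as exactly the summands of $g(i+s_m t^m\bmod n,\,m-1)$. The index bookkeeping you flag does check out, since $\sum_{\ell=m+1-k}^{m}s_\ell t^\ell=s_m t^m+\sum_{\ell=m-k'}^{m-1}s_\ell t^\ell$ for $k=k'+1$.
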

\begin{proof}
By equation~(\ref{subsumlessthan}),
{\footnotesize 
\begin{eqnarray}
&&g\left(i,m\right)\nonumber\\
&=&
\sum_{s_m=0}^{t-1}\,
\sum_{s_{m-1},s_{m-2},\ldots,s_0=0}^{t-1}\,
\left(d\left(i,i+s_m\cdot t^m\bmod{n}
\right)
\vphantom{\sum_{\ell=m-1-k}^{m-2}\,s_\ell\cdot t^\ell}
\right.\nonumber\\
&&\left.
+\sum_{k=0}^{m-1}\,
d\left(i+s_m\cdot t^m+\sum_{\ell=m-k}^{m-1}\,s_\ell\cdot t^\ell
\bmod{n},
i+s_m\cdot t^m+\sum_{\ell=m-1-k}^{m-1}\,s_\ell\cdot t^\ell\bmod{n}
\right)
\right),\nonumber\\
&&g\left(i+s_m\cdot t^m\bmod{n},m-1\right)\nonumber\\
&=&
\sum_{s_{m-1},s_{m-2},\ldots,s_0=0}^{t-1}\,
\sum_{k=0}^{m-1}\,
d\left(i+s_m\cdot t^m+\sum_{\ell=m-k}^{m-1}\,s_\ell\cdot t^\ell
\bmod{n},
i+s_m\cdot t^m+\sum_{\ell=m-1-k}^{m-1}\,s_\ell\cdot t^\ell\bmod{n}
\right)\nonumber
\end{eqnarray}
}
for $s_m\in\{0,1,\ldots,t-1\}$.
Furthermore,
\begin{eqnarray}
\sum_{s_m=0}^{t-1}\,
\sum_{s_{m-1},s_{m-2},\ldots,s_0=0}^{t-1}\,
d\left(i,i+s_m\cdot t^m\bmod{n}\right)
=
t^m\sum_{s_m=0}^{t-1}\,
d\left(i,i+s_m\cdot t^m\bmod{n}
\right).
\nonumber
\end{eqnarray}
\end{proof}

\begin{lemma}
\label{recurrencelemma2}
For all $i\in\{0,1,\ldots,n-1\}$ and $m\in\{1,2,\ldots,h-1\}$,
\begin{eqnarray}
f\left(i,m\right)
&=&
\left(1+\sum_{r=0}^{m-1}\, s'_r\cdot t^r\right) d\left(i,i+s'_m\cdot
t^m\bmod{n}\right)\nonumber\\
&+&
t^m
\sum_{s_m=0}^{s'_m-1}\,
d\left(i,i+s_m\cdot t^m\bmod{n}\right)\nonumber\\
&+&
f\left(i+s'_m\cdot t^m\bmod{n},m-1\right)\nonumber\\
&+&\sum_{s_m=0}^{s'_m-1}\, g\left(i+s_m\cdot t^m\bmod{n},m-1
\right).
\nonumber
\end{eqnarray}
\end{lemma}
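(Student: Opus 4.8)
The plan is to mirror the proof of Lemma~\ref{recurrencelemma1}, but to carefully track the effect of the indicator $\chi[\sum_{r=0}^m s_r t^r\le \sum_{r=0}^m s'_r t^r]$ appearing in the definition~(\ref{subsumlessthanorequalto}) of $f(i,m)$. The first move is to peel off the $k=0$ summand from $\sum_{k=0}^m$: for $k=0$ the two arguments of $d$ are $i$ and $i+s_m t^m\bmod n$, so that summand equals $d(i,i+s_m t^m\bmod n)$ and does not depend on $s_{m-1},\ldots,s_0$. For $k\ge1$, both arguments contain the term $s_m t^m$, so writing $i'=i+s_m t^m\bmod n$ and reindexing $k'=k-1\in\{0,\ldots,m-1\}$, the inner double sum over $k\ge1$ and over $s_{m-1},\ldots,s_0$ matches exactly the summand structure of $g(i',m-1)$ and $f(i',m-1)$ from~(\ref{subsumlessthan}) and~(\ref{subsumlessthanorequalto}).

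Next I would split the outer sum over $s_m\in\{0,1,\ldots,t-1\}$ into the three ranges $s_m<s'_m$, $s_m=s'_m$, and $s_m>s'_m$, exploiting the fact that the low-order digits contribute at most $t^m-1$. Concretely, because $\sum_{r=0}^{m-1} s_r t^r\le t^m-1$ for every choice of the lower digits, when $s_m<s'_m$ the inequality $\sum_{r=0}^m s_r t^r\le \sum_{r=0}^m s'_r t^r$ holds automatically, so $\chi=1$ independently of $s_{m-1},\ldots,s_0$; when $s_m>s'_m$ the inequality always fails, so $\chi=0$; and when $s_m=s'_m$ the indicator collapses to $\chi[\sum_{r=0}^{m-1} s_r t^r\le \sum_{r=0}^{m-1} s'_r t^r]$, which is precisely the indicator defining $f(\cdot,m-1)$.

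With this case split the four claimed terms appear cleanly. For $s_m<s'_m$ (where $\chi\equiv1$) the peeled $k=0$ contribution is $\sum_{s_{m-1},\ldots,s_0} d(i,i+s_m t^m\bmod n)=t^m\,d(i,i+s_m t^m\bmod n)$, summing to the second term $t^m\sum_{s_m=0}^{s'_m-1} d(i,i+s_m t^m\bmod n)$, while the $k\ge1$ part gives $g(i+s_m t^m\bmod n,m-1)$, summing to the fourth term. For $s_m=s'_m$, the key counting observation is that the number of tuples $(s_{m-1},\ldots,s_0)\in\{0,\ldots,t-1\}^m$ with $\sum_{r=0}^{m-1} s_r t^r\le \sum_{r=0}^{m-1} s'_r t^r$ equals exactly $1+\sum_{r=0}^{m-1} s'_r t^r$, via the bijection between such tuples and the integers $0,1,\ldots,\sum_{r=0}^{m-1} s'_r t^r$; this yields the first term, and the $k\ge1$ part, which retains the surviving indicator, reproduces $f(i+s'_m t^m\bmod n,m-1)$, the third term. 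For $s_m>s'_m$ every summand is killed by $\chi=0$. Summing the three cases gives the claimed identity.

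The main obstacle I expect is bookkeeping rather than conceptual: verifying the automatic truth and automatic falsity of the indicator in the two extreme ranges from the bound $\sum_{r=0}^{m-1} s_r t^r\le t^m-1$, and confirming the tuple-counting identity $1+\sum_{r=0}^{m-1} s'_r t^r$ for the boundary case $s_m=s'_m$. Once the reindexing $k'=k-1$ is arranged so that the $k\ge1$ terms become literally the summands of $g$ and $f$ at level $m-1$, the remainder is a matter of collecting terms.
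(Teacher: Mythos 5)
Your proposal is correct and follows essentially the same route as the paper: your three-way case split on $s_m$ versus $s'_m$ is exactly the paper's observations (i)--(iii), the peeling of the $k=0$ summand and reindexing $k'=k-1$ matches how the paper identifies the $f(\cdot,m-1)$ and $g(\cdot,m-1)$ pieces, and the tuple-counting identity $1+\sum_{r=0}^{m-1}s'_r\cdot t^r$ is the same bijection argument the paper invokes. No substantive difference.
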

\begin{proof}
Observe
the following
for all
$s_m$, $s_{m-1}$, $\ldots$, $s_0\in\{0,1,\ldots,t-1\}$:
\begin{enumerate}[(i)]
\item\label{item1}
If $s_m=s'_m$, then
$\sum_{r=0}^m\, s_r\cdot t^r\le \sum_{r=0}^m\, s'_r\cdot t^r$
if and only if
$\sum_{r=0}^{m-1}\, s_r\cdot t^r\le \sum_{r=0}^{m-1}\, s'_r\cdot t^r$;
\item\label{item2}
If $s_m<s'_m$, then
$\sum_{r=0}^m\, s_r\cdot t^r< \sum_{r=0}^m\, s'_r\cdot t^r$;
\item\label{item3}
If $s_m>s'_m$, then
$\sum_{r=0}^m\, s_r\cdot t^r> \sum_{r=0}^m\, s'_r\cdot t^r$.
\end{enumerate}

\comment{ 
For convenience,
$$
Q\stackrel{\text{def.}}{=}
d\left(i+\sum_{\ell=0}^{k-1}\, s_\ell \cdot t^\ell \bmod{n},
i+\sum_{\ell=0}^k\, s_\ell \cdot t^\ell \bmod{n}
\right).
$$
}

We have
{\footnotesize 
\begin{eqnarray}
&&
f\left(i,m\right)
\label{thethingwewanttorecurseon}\\
&\stackrel{\text{equation~(\ref{subsumlessthanorequalto})}}{=}&
\sum_{s_m=0}^{t-1}\,
\sum_{s_{m-1},s_{m-2},\ldots,s_0=0}^{t-1}\,
\chi\left[\sum_{r=0}^m\, s_r\cdot t^r\le \sum_{r=0}^m\, s'_r\cdot t^r
\right]
\cdot
\left(d\left(i,i+s_m\cdot t^m\bmod{n}\right)
\vphantom{\sum_{\ell=m-2-k}^{m-2}\, s_\ell\cdot t^\ell\bmod{n}}
\right.\nonumber\\
&&\left.+\sum_{k=0}^{m-1}\, d\left(i+s_m\cdot t^m+\sum_{\ell=m-k}^{m-1}\,
s_\ell\cdot t^\ell\bmod{n}, i+s_m\cdot t^m+\sum_{\ell=m-1-k}^{m-1}\,
s_\ell\cdot t^\ell\bmod{n}
\right)
\right)\nonumber\\
&\stackrel{\text{item~(\ref{item3})}}{=}&
\sum_{s_m=0}^{s'_m}\,
\sum_{s_{m-1},s_{m-2},\ldots,s_0=0}^{t-1}\,
\left(
\chi\left[\left(s_m=s'_m\right)\land
\left(\sum_{r=0}^m\, s_r\cdot t^r\le \sum_{r=0}^m\, s'_r\cdot t^r
\right)
\right]\right.\nonumber\\
&&\left.+\chi\left[\left(s_m<s'_m\right)\land
\left(\sum_{r=0}^m\, s_r\cdot t^r\le \sum_{r=0}^m\, s'_r\cdot t^r
\right)
\right]
\right)
\cdot
\left(d\left(i,i+s_m\cdot t^m\bmod{n}\right)
\vphantom{\sum_{\ell=m-2-k}^{m-2}\, s_\ell\cdot t^\ell\bmod{n}}
\right.\nonumber\\
&&\left.+\sum_{k=0}^{m-1}\, d\left(i+s_m\cdot t^m+\sum_{\ell=m-k}^{m-1}\,
s_\ell\cdot t^\ell\bmod{n}, i+s_m\cdot t^m+\sum_{\ell=m-1-k}^{m-1}\,
s_\ell\cdot t^\ell\bmod{n}
\right)
\right)\nonumber\\
&\stackrel{\text{item~(\ref{item1})}}{=}&
\sum_{s_m=0}^{s'_m}\,
\sum_{s_{m-1},s_{m-2},\ldots,s_0=0}^{t-1}\,
\left(
\chi\left[\left(s_m=s'_m\right)\land
\left(\sum_{r=0}^{m-1}\, s_r\cdot t^r\le \sum_{r=0}^{m-1}\, s'_r\cdot t^r
\right)
\right]\right.\nonumber\\
&&\left.+\chi\left[\left(s_m<s'_m\right)\land
\left(\sum_{r=0}^m\, s_r\cdot t^r\le \sum_{r=0}^m\, s'_r\cdot t^r
\right)
\right]
\right)
\cdot
\left(d\left(i,i+s_m\cdot t^m\bmod{n}\right)
\vphantom{\sum_{\ell=m-2-k}^{m-2}\, s_\ell\cdot t^\ell\bmod{n}}
\right.\nonumber\\
&&\left.+\sum_{k=0}^{m-1}\, d\left(i+s_m\cdot t^m+\sum_{\ell=m-k}^{m-1}\,
s_\ell\cdot t^\ell\bmod{n}, i+s_m\cdot t^m+\sum_{\ell=m-1-k}^{m-1}\,
s_\ell\cdot t^\ell\bmod{n}
\right)
\right)\nonumber\\
&\stackrel{\text{item~(\ref{item2})}}{=}&
\sum_{s_m=0}^{s'_m}\,
\sum_{s_{m-1},s_{m-2},\ldots,s_0=0}^{t-1}\,
\left(
\chi\left[\left(s_m=s'_m\right)\land
\left(\sum_{r=0}^{m-1}\, s_r\cdot t^r\le \sum_{r=0}^{m-1}\, s'_r\cdot t^r
\right)
\right]\right.\nonumber\\
&&\left.+\chi\left[s_m<s'_m
\right]
\vphantom{\sum_{r=0}^{m-2}\, s'_r\cdot t^r}
\right)
\cdot
\left(d\left(i,i+s_m\cdot t^m\bmod{n}\right)
\vphantom{\sum_{\ell=m-2-k}^{m-2}\, s_\ell\cdot t^\ell\bmod{n}}
\right.\nonumber\\
&&\left.+\sum_{k=0}^{m-1}\, d\left(i+s_m\cdot t^m+\sum_{\ell=m-k}^{m-1}\,
s_\ell\cdot t^\ell\bmod{n}, i+s_m\cdot t^m+\sum_{\ell=m-1-k}^{m-1}\,
s_\ell\cdot t^\ell\bmod{n}
\right)
\right).
\nonumber
\end{eqnarray}
}

By equation~(\ref{subsumlessthanorequalto}),
{\small 
\begin{eqnarray}
&&f\left(i+s'_m\cdot t^m\bmod{n},m-1
\right)\nonumber\\
&=& \sum_{s_{m-1},s_{m-2},\ldots,s_0=0}^{t-1}\,
\chi\left[\,\sum_{r=0}^{m-1}\,s_r\cdot t^r
\le \sum_{r=0}^{m-1}\,s'_r\cdot t^r
\,\right]\nonumber\\
&\cdot&
\sum_{k=0}^{m-1}\,
d\left(i+s'_m\cdot t^m+\sum_{\ell=m-k}^{m-1}\,s_\ell\cdot
t^\ell\bmod{n},
i+s'_m\cdot t^m+\sum_{\ell=m-1-k}^{m-1}\,s_\ell\cdot
t^\ell\bmod{n}
\right)\nonumber\\
&=&
\sum_{s_m=0}^{s'_m}\,
\sum_{s_{m-1},s_{m-2},\ldots,s_0=0}^{t-1}\,
\chi\left[\,\left(s_m=s'_m\right)\land
\left(\sum_{r=0}^{m-1}\,s_r\cdot t^r
\le \sum_{r=0}^{m-1}\,s'_r\cdot t^r\right)
\,\right]\nonumber\\
&\cdot&\sum_{k=0}^{m-1}\,
d\left(i+s_m\cdot t^m+\sum_{\ell=m-k}^{m-1}\,s_\ell\cdot
t^\ell\bmod{n},
i+s_m\cdot t^m+\sum_{\ell=m-1-k}^{m-1}\,s_\ell\cdot
t^\ell\bmod{n}
\right).
\nonumber
\end{eqnarray}
}
By equation~(\ref{subsumlessthan}),
{\small 
\begin{eqnarray}
&&\sum_{s_m=0}^{s'_m-1}\,
g\left(i+s_m\cdot t^m\bmod{n},m-1
\right)\nonumber\\
&=&
\sum_{s_m=0}^{s'_m}\,
\sum_{s_{m-1},s_{m-2},\ldots,s_0=0}^{t-1}\,
\chi\left[\,s_m<s'_m\,\right]
\nonumber\\
&\cdot&
\sum_{k=0}^{m-1}\,
d\left(i+s_m\cdot t^m+\sum_{\ell=m-k}^{m-1}\, s_\ell\cdot t^\ell
\bmod{n}, i+s_m\cdot t^m+\sum_{\ell=m-1-k}^{m-1}\, s_\ell\cdot
t^\ell \bmod{n}
\right).
\nonumber
\end{eqnarray}
}

Because each number in $\{0,1,\ldots,\sum_{r=0}^{m-1}\,s'_r\cdot t^r\}$
can be written
uniquely
as
$\sum_{r=0}^{m-1}\,s_r\cdot t^r$,
where $s_{m-1}$, $s_{m-2}$, $\ldots$, $s_0\in\{0,1,\ldots,t-1\}$,
\begin{eqnarray}
&&\sum_{s_m=0}^{s'_m}\,
\sum_{s_{m-1},s_{m-2},\ldots,s_0=0}^{t-1}\,
\chi\left[\,\left(s_m=s'_m\right)\land
\left(\sum_{r=0}^{m-1}\,s_r\cdot t^r \le \sum_{r=0}^{m-1}\, s'_r\cdot t^r
\right)
\,\right]\nonumber\\
&\cdot&
d\left(i,i+s_m\cdot t^m\bmod{n}\right)\nonumber\\
&=&
\sum_{s_m=0}^{s'_m}\,
\left(1+\sum_{r=0}^{m-1}\, s'_r\cdot t^r\right)
\cdot
\chi\left[\,s_m=s'_m\,\right]
\cdot d\left(i,i+s_m\cdot t^m\bmod{n}\right)\nonumber\\
&=&
\left(1+\sum_{r=0}^{m-1}\, s'_r\cdot t^r\right)
d\left(i,i+s'_m\cdot t^m\bmod{n}\right).
\nonumber
\end{eqnarray}
Finally,
\begin{eqnarray}
&&\sum_{s_m=0}^{s'_m}\,
\sum_{s_{m-1},s_{m-2},\ldots,s_0=0}^{t-1}\,
\chi\left[\,s_m<s'_m
\,\right]
\cdot d\left(i,i+s_m\cdot t^m\bmod{n}
\right)\nonumber\\
&=& t^m \sum_{s_m=0}^{s'_m-1}\, d\left(i,i+s_m\cdot
t^m\bmod{n}
\right).
\label{lastoftediousequations}
\end{eqnarray}
Equations~(\ref{thethingwewanttorecurseon})--(\ref{lastoftediousequations})
complete the proof.
\end{proof}

\begin{figure}
\begin{algorithmic}[1]
\STATE $t\leftarrow\lceil n^{1/h}\rceil$;
\STATE Find
the $t$-ary representation
of $n-1$, denoted
$(s'_{h-1},s'_{h-2},\ldots,s'_0)\in\{0,1,\ldots,t-1\}^h$;
\FOR{$i\in\{0,1,\ldots,n-1\}$}
  \STATE $f[i][0]\leftarrow \sum_{s_0=0}^{s'_0}\, d(i,i+s_0\bmod{n})$;
  \STATE $g[i][0]\leftarrow \sum_{s_0=0}^{t-1}\, d(i,i+s_0\bmod{n})$;
\ENDFOR
\FOR{$m=1$ up to $h-1$}
  \FOR{$i\in\{0,1,\ldots,n-1\}$}
    \STATE $f[i][m]\leftarrow (1+\sum_{r=0}^{m-1}\,s'_r\cdot t^r)\,
d(i,i+s'_m\cdot t^m\bmod{n})$;
    \STATE $f[i][m]\leftarrow f[i][m]+t^m\sum_{s_m=0}^{s'_m-1}\,
d(i,i+s_m\cdot t^m\bmod{n})$;
    \STATE $f[i][m]\leftarrow f[i][m]+f[i+s'_m\cdot t^m\bmod{n}][m-1]$;
    \STATE $f[i][m]\leftarrow f[i][m]+\sum_{s_m=0}^{s'_m-1}\,
g[i+s_m\cdot t^m\bmod{n}][m-1]$;
    \STATE $g[i][m]\leftarrow t^m\sum_{s_m=0}^{t-1}\,
d(i,i+s_m\cdot t^m\bmod{n})$;
    \STATE $g[i][m]\leftarrow g[i][m]+\sum_{s_m=0}^{t-1}\,
g[i+s_m\cdot t^m\bmod{n}][m-1]$;
  \ENDFOR
\ENDFOR
\STATE Output $\mathop{\mathrm{argmin}}_{i=0}^{n-1}\, f[i][h-1]$,
breaking ties arbitrarily;
\end{algorithmic}
\caption{Algorithm {\sf find-median} with input
a
metric space $(\{0,1,\ldots,n-1\},d)$ and $h\in\mathbb{Z}^+\setminus\{1\}$.}
\label{mainalgorithm}
\end{figure}

\comment{ 
Lemmas~\ref{recurrencelemma1}--\ref{recurrencelemma2}
allow us to compute
$f(\cdot)$
by dynamic programming.
}

\begin{lemma}
\label{algorithmapproximationratiolemma}
Algorithm {\sf find-median} in Fig.~\ref{mainalgorithm}
is $(2h)$-approximate for {\sc metric $1$-median}.
\end{lemma}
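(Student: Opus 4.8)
The plan is to show that Algorithm~{\sf find-median} of Fig.~\ref{mainalgorithm} stores in each entry $f[i][h-1]$ exactly the value $f(i,h-1)$ of equation~(\ref{subsumlessthanorequalto}), and then to chain the lemmas already proved. The whole content of the argument is a single induction verifying the correctness of the dynamic program; once that is in place, Lemmas~\ref{theDPresultisthesumofpseudodistanceslemma} and~\ref{approximationratiolemma} deliver the approximation guarantee immediately. So the statement is really a bookkeeping lemma that checks the pseudocode against the recurrences.

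Concretely, I would prove by induction on $m\in\{0,1,\ldots,h-1\}$ the joint claim that $f[i][m]=f(i,m)$ and $g[i][m]=g(i,m)$ for all $i\in\{0,1,\ldots,n-1\}$. For the base case $m=0$, the two assignments in the first loop set $f[i][0]$ and $g[i][0]$ to precisely the right-hand sides of equations~(\ref{thebasecaseofthefirstfunction}) and~(\ref{thebasecaseofthesecondfunction}), so the claim holds. For the inductive step, fix $m\in\{1,2,\ldots,h-1\}$ and assume $f[i][m-1]=f(i,m-1)$ and $g[i][m-1]=g(i,m-1)$ for every $i$. The four successive assignments building $f[i][m]$ in the main loop reproduce, summand by summand, the four terms on the right-hand side of the recurrence in Lemma~\ref{recurrencelemma2}; the last two of these read off the level-$(m-1)$ entries $f[i+s'_m\cdot t^m\bmod n][m-1]$ and $g[i+s_m\cdot t^m\bmod n][m-1]$, which by the inductive hypothesis equal $f(i+s'_m\cdot t^m\bmod n,m-1)$ and $g(i+s_m\cdot t^m\bmod n,m-1)$. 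Hence $f[i][m]=f(i,m)$. Likewise, the two assignments building $g[i][m]$ reproduce the right-hand side of the recurrence in Lemma~\ref{recurrencelemma1}, giving $g[i][m]=g(i,m)$. This closes the induction.

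The one point demanding care is that the level-$m$ recurrences consult level-$(m-1)$ entries at the shifted indices $i+s_m\cdot t^m\bmod n$ and $i+s'_m\cdot t^m\bmod n$, which in general differ from $i$, so I must confirm that all these entries are already present when a given iteration executes. This is guaranteed by the loop nesting: the outer loop runs $m$ from $1$ up to $h-1$, and for each such $m$ the inner loop has, on a previous pass of the outer loop, already filled every entry at level $m-1$ over the full index range $\{0,1,\ldots,n-1\}$; since each shifted index is itself an element of $\{0,1,\ldots,n-1\}$, the required value is available. I expect this dependency-order check to be the main (albeit minor) obstacle, since everywhere else the proof is just term-by-term matching of pseudocode lines against the recurrence lemmas.

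Having established $f[i][h-1]=f(i,h-1)$ for all $i$, Lemma~\ref{theDPresultisthesumofpseudodistanceslemma} gives $f[i][h-1]=\sum_{j=0}^{n-1}\tilde{d}(i,i+j\bmod n)$. The algorithm returns $\alpha=\mathop{\mathrm{argmin}}_{i=0}^{n-1} f[i][h-1]$, so $\alpha$ satisfies equation~(\ref{theoptimalpointwithrespecttopseudodistance}). Lemma~\ref{approximationratiolemma} then yields inequality~(\ref{approximationinequality}), i.e. $\sum_{j=0}^{n-1} d(\alpha,j)\le 2h\cdot\min_{i=0}^{n-1}\sum_{j=0}^{n-1} d(i,j)$, which is exactly the assertion that the output $\alpha$ is $(2h)$-approximate for {\sc metric $1$-median}. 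This completes the proof.
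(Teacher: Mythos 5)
Your proposal is correct and follows essentially the same route as the paper's proof: verify that lines 4--5 compute $f(i,0)$ and $g(i,0)$ via equations~(\ref{thebasecaseofthefirstfunction})--(\ref{thebasecaseofthesecondfunction}), that lines 9--14 implement the recurrences of Lemmas~\ref{recurrencelemma1}--\ref{recurrencelemma2} by dynamic programming, and then chain Lemmas~\ref{theDPresultisthesumofpseudodistanceslemma} and~\ref{approximationratiolemma}. You merely make the induction and the dependency-order check more explicit than the paper does.
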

\begin{proof}
By
equations~(\ref{thebasecaseofthefirstfunction})--(\ref{thebasecaseofthesecondfunction}),
lines~4--5 of {\sf find-median}
compute $f(i,0)$
and $g(i,0)$.
Then, by Lemmas~\ref{recurrencelemma1}--\ref{recurrencelemma2},
$f(i,m)$
and $g(i,m)$
can be found by dynamic programming as in lines~9--14.
So line~17 outputs
$\mathop{\mathrm{argmin}}_{i=0}^{n-1}\,f(i,h-1)$,
which equals
$$\mathop{\mathrm{argmin}}_{i=0}^{n-1}\,\sum_{j=0}^{n-1}\,
\tilde{d}\left(i,i+j\bmod{n}\right)$$
by Lemma~\ref{theDPresultisthesumofpseudodistanceslemma}.
Now Lemma~\ref{approximationratiolemma}
gives the approximation ratio of $2h$.
\end{proof}

We now state our main theorem.

\begin{theorem}
{\sc Metric $1$-median} has a
deterministic
$O(h n^{1+1/h})$-time
$(2h)$-approximation
nonadaptive
algorithm for each
$h\in\mathbb{Z}^+\setminus\{1\}$.
\end{theorem}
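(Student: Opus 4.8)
The plan is to read off the approximation ratio from Lemma~\ref{algorithmapproximationratiolemma} and then supply the two remaining ingredients of the theorem, namely the $O(hn^{1+1/h})$ running time of Algorithm {\sf find-median} (Fig.~\ref{mainalgorithm}) and its nonadaptivity. Since Lemma~\ref{algorithmapproximationratiolemma} already certifies that the algorithm is $(2h)$-approximate, no further accuracy argument is needed, and the proof reduces to a bookkeeping of the algorithm's operations.

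For the running time I would first record that $t=\lceil n^{1/h}\rceil$ satisfies $t=O(n^{1/h})$: for $n\ge2$ and $h\ge2$ we have $n^{1/h}\ge1$, so $t\le n^{1/h}+1\le 2n^{1/h}$. Lines~1--2 cost $O(h)$, and the initialization loop (lines~3--6) evaluates, for each of the $n$ values of $i$, two sums of at most $t$ distances, for a total of $O(nt)$. In the main double loop (lines~7--16) the index $m$ ranges over $h-1$ values and $i$ over $n$ values; within a single inner iteration, lines~9 and~11 take $O(1)$, while lines~10, 12, 13, and~14 each form a sum of at most $t$ distance evaluations or table lookups, so one inner iteration costs $O(t)$ and the whole loop costs $O(hnt)$. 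I would also note that the coefficients $\sum_{r=0}^{m-1}s'_r t^r$ used in lines~9--10 can be precomputed once in $O(h)$ total time, so they do not inflate the per-iteration cost. Line~17 scans $n$ table entries in $O(n)$ time. Adding these, the dominant contribution is $O(hnt)=O(hn^{1+1/h})$.

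For nonadaptivity I would observe that every distance inspected by {\sf find-median} occurs in lines~4--5 or~9--14, and in each such occurrence both arguments of $d$ have the form $i+c\bmod n$ with $c$ assembled from $t$, the digits $s'_\ell$, and the loop variables; since $t$ and the $t$-ary representation $(s'_{h-1},\ldots,s'_0)$ of $n-1$ depend only on $n$ and $h$, the entire sequence of index pairs passed to $d$ is fixed once $n$ and $h$ are fixed, independently of the metric $d$. The dynamic-programming updates and the final argmin only combine previously computed quantities and inspect no additional distances, so the sequence of inspected distances depends on $n$ alone; hence the algorithm is nonadaptive.

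I expect the only step requiring genuine care to be the per-iteration accounting in the main loop: one must check that each of lines~10, 12, 13, and~14 contributes $O(t)$ rather than more, and that the ceiling in $t=\lceil n^{1/h}\rceil$ costs only a constant factor so that $t=O(n^{1/h})$. Everything else — the approximation ratio and the nonadaptivity — follows from the already established lemmas and a direct inspection of the pseudocode.
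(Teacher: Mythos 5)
Your proposal is correct and follows essentially the same route as the paper's own proof: the approximation ratio is delegated to Lemma~\ref{algorithmapproximationratiolemma}, the running time is bounded by $O(nt)$ for initialization plus $O(t)$ per iteration of the main loop (using $s'_m\le t-1$ and precomputed coefficients), and nonadaptivity is observed directly from the pseudocode. You merely spell out the nonadaptivity argument and the bound $t=O(n^{1/h})$ in more detail than the paper, which states these as immediate.
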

\begin{proof}
Clearly, {\sf find-median} is deterministic and nonadaptive.
Furthermore, it is $(2h)$-approximate for {\sc metric $1$-median}
by Lemma~\ref{algorithmapproximationratiolemma}.
As $s'_i\le t-1$ for all $i\in\{0,1,\ldots,h-1\}$,
the loop in lines~3--6 of {\sf find-median} takes $O(nt)$ time.
By precomputing $t^i$ and $\sum_{r=0}^i\, s'_r\cdot t^r$ for all
$i\in\{0,1,\ldots,h-1\}$, each iteration of the loop in lines~8--15
takes $O(t)$ time.
\end{proof}

\comment{ 
By equation~(\ref{pseudodistance}),
\begin{eqnarray}
\sum_{j=0}^{n-1}\, \tilde{d}\left(i,i+j \bmod{n}\right)
=\sum_{k=0}^{h-1}\,
\sum_{j=0}^{n-1}\,
d\left(
i+\sum_{\ell=0}^{k-1}\,s_\ell(j)\cdot t^\ell \bmod{n},
i+\sum_{\ell=0}^k\,s_\ell(j)\cdot t^\ell \bmod{n}
\right)
\end{eqnarray}
for all $i\in\{0,1,\ldots,n-1\}$.

Define
$(s'_{h-1},s'_{h-2},\ldots,s'_0)\in\{0,1,\ldots,t-1\}^h$
to be
the $t$-ary representation of $n-1$.
For $m\in\{0,1,\ldots,h\}$,
{\small 
\begin{eqnarray}
A_m
&\stackrel{\text{def.}}{=}&
\left\{\left(s_{h-1},s_{h-2},\ldots,s_0\right)
\in\left\{0,1,\ldots,t-1\right\}^h
\mid
\left(
\sum_{r=m}^{h-1}\,s_r\cdot t^r=\sum_{r=m}^{h-1}\,s'_r\cdot t^r
\right)
\land \left(\sum_{r=0}^{h-1}\,s_r\cdot t^r
\le
n-1
\right)
\right\},\\
B_m
&\stackrel{\text{def.}}{=}&
\left\{\left(s_{h-1},s_{h-2},\ldots,s_0\right)
\in\left\{0,1,\ldots,t-1\right\}^h
\mid
\left(\sum_{r=m}^{h-1}\,s_r\cdot t^r<\sum_{r=m}^{h-1}\,s'_r\cdot t^r\right)
\land \left(\sum_{r=0}^{h-1}\,s_r\cdot t^r
\le
n-1
\right)
\right\}.
\end{eqnarray}
}
Clearly, $A_h$ is the set of $t$-ary representations of the numbers in
$\{0,1,\ldots,n-1\}$.


\begin{eqnarray}
&&\sum_{k=0}^{m-1}\,
\sum_{s_0,s_1,\ldots,s_{h-1}=0}^{t-1}\,
\chi\left[\,\left(\sum_{q=0}^{h-1}\,s_q\cdot t^q\le n-1\right)
\land
\left(\neg\bigwedge_{r=m}^{h-1}\left(s_r=s'_r\right)\right)
\,\right]
\nonumber\\
&&\cdot
d\left(
i+\sum_{\ell=0}^{k-1}\,s_\ell\cdot t^\ell \bmod{n},
i+\sum_{\ell=0}^k\,s_\ell\cdot t^\ell \bmod{n}
\right)
\end{eqnarray}
}

\bibliographystyle{plain}
\bibliography{median_refinement}

\noindent

\end{document}